\newtheorem{defi}{\bf Definition}
\newtheorem{rmk}{\bf Remark}
\newtheorem{thm}{\bf Theorem}
\begin{document}

\title{A Multiobjective Approach to Multi-microgrid System Design}

\author{Wei-Yu~Chiu,~\IEEEmembership{Member,~IEEE}, Hongjian Sun,~\IEEEmembership{Member,~IEEE},  and
        H.~Vincent~Poor,~\IEEEmembership{Fellow,~IEEE} %\vspace{-4ex}
\thanks{This work was supported in part by the Ministry of Science and Technology of Taiwan under Grant 102-2218-E-155-004-MY3,
and in part by the U.S. National Science Foundation under Grant CMMI-1435778.
The research leading to these results has received funding from the European Commission's Horizon 2020 Framework Programme (H2020/2014-2020) under grant agreement No. 646470, SmarterEMC2 Project.}
\thanks{W.-Y. Chiu is with the Department of Electrical Engineering and with
the Innovation Center for Big Data and Digital Convergence, Yuan Ze University, Taoyuan 32003, Taiwan
        (e-mail: chiuweiyu@gmail.com).}%
\thanks{H. Sun is with the School of Engineering and Computing Science, Durham University, DH1 3LE, U.K. (e-mail: mrhjsun@hotmail.com).}
\thanks{H.~V.~Poor is with the Department of Electrical Engineering, Princeton University, Princeton, NJ 08544, USA (e-mail: poor@princeton.edu).}%
\thanks{
\copyright 2015 IEEE. Personal use of this material is permitted. Permission from IEEE must be
obtained for all other uses, in any current or future media, including
reprinting/republishing this material for advertising or promotional purposes, creating new
collective works, for resale or redistribution to servers or lists, or reuse of any copyrighted
component of this work in other works.
}
\thanks{Digital Object Identifier 10.1109/TSG.2015.2399497}
}
\maketitle

\begin{abstract}

The main goal of this study is to design  a market operator (MO) and a distribution network operator (DNO) for a network of microgrids
in consideration of  multiple objectives.
This is a high-level design and
only those microgrids with nondispatchable renewable energy sources are considered.
For a power grid in the network, the net value derived from providing power to the network must be maximized.
For a microgrid,
it is desirable to maximize the net gain derived from consuming the received power.
Finally, for an independent system operator,  stored energy levels at microgrids
must be maintained as close as possible to storage capacity to secure network emergency operation.
To achieve these objectives, a multiobjective approach is proposed:
the price signal generated by the MO and  power distributed by the DNO are assigned based on
a Pareto optimal solution of a multiobjective optimization problem.
By using the proposed approach, a fair scheme that does not advantage one particular objective can be attained.
Simulations are provided to validate the proposed methodology.
\end{abstract}

\begin{IEEEkeywords}
Distribution network operator (DNO),   market operator (MO), microgrids, multiobjective approach, multi-microgrid design, Pareto optimality, smart grid, utility maximization.
\end{IEEEkeywords}

\section{Introduction}

Dynamic pricing in the smart grid is an approach that helps
reshape or reduce the power demands by varying the cost of power service over time.
Power consumers who are sensitive
to the energy price may change their power use in response to the varying price signals~\cite{bu2011dynamic}.
Dynamic pricing has been extensively discussed and explored in the literature~\cite{Alva3,Alva5,chiu_sg,kok2011dynamic,huhardware,SGComm_Nov_12,shift1,shift2,peak_load,mohsenian2010optimal,wu2011demand,mohsenian2010autonomous,vytelingum2010agent,voice2011decentralised}.
For instance, a pricing scheme can be used as an area control method~\cite{Alva3,Alva5,chiu_sg,kok2011dynamic}
or a home-scale method~\cite{huhardware}  for energy management.
 A grid node may store energy in a local energy storage system when the price is low,
and use the stored energy when the price increases~\cite{SGComm_Nov_12}.
When shiftable loads are involved in the grid,  power users may vary their power demands according to the price~\cite{shift1,shift2}.
This further indicates that the pricing scheme is capable of lowering the peak load~\cite{peak_load} so that the maximum system capacity and thus the cost
 can be reduced.

For a successful application of dynamic pricing,
 elements such as
 active participation of consumers in demand response~\cite{mul_agent1}, robust energy management~\cite{EMS1,EMS0}, and
proper power distribution~\cite{EMS2}
 are  needed in the smart grid.
To include these elements for further investigation on the grid, we  consider
a network of microgrids, an independent system operator (ISO) that consists of
a market operator (MO) and a distribution network operator (DNO), and a power grid~\cite{mul_agent}.
In the network, microgrids are locally connected with nondispatchable renewable energy sources (RESs)~\cite{int_RES2},
and some of them  possess energy storage systems.

The microgrids take advantage of  the power supply from the power grid, RESs,
and their local energy storage systems to meet their respective power demand.
The MO generates a price signal that is related to the power supply, power demand and stored energy in the network.
The DNO distributes the power generated at the power grid to microgrids.
The MO and DNO in the ISO are positioned at the highest control level in the multi-microgrid environment,
which allows these two entities to have a substantial impact on the grid operation.

To facilitate ensuing discussions, we refer to the network and the participants of the network as the multi-microgrid system.
Furthermore, we refer to
determining the price signal generated by the MO
and the power distribution performed by the DNO as  a high-level design of the multi-microgrid system, i.e.,
the multi-microgrid system design includes both an electricity market design and a power distribution design.
In the multi-microgrid system, the power grid and microgrids desire their respective asset utilization to be optimized,
and the ISO desires that the multi-microgrid system should store as much energy as possible to secure emergency operation.
From this perspective,  an optimal multi-microgrid system design leads to  optimizing interests of the microgrids, power grid, and ISO simultaneously.
This suggests the consideration of a multiobjective optimization problem (MOP).

In contrast,
 most existing  approaches related to  high-level grid system designs
  consider solely utility maximization
 by maximizing certain aggregate benefits, sometimes termed the social welfare~\cite{price_em}.
For instance, the associated aggregate function  can be a sum of all utility functions of microgrids minus the cost of power generation at the power grid~\cite{samadi2010optimal,tarasak2011optimal}.
 These formulations lead to a single-objective optimization problem (SOP).
Market price generation,  power distribution, or efficient energy consumption can then be characterized as
the solution to the SOP.

Although able to produce a reasonable design,
 existing approaches can suffer from at least one of the following drawbacks.
First, optimizing a system performance index related to emergency operation is often neglected.
Maximizing the aggregate utility of the power grid and microgrids is the only objective that must be attained.
Second, the mutual relationship between   the utility maximization of the power grid and of the microgrids
  is seldom addressed. Third, system operation schemes resulting from maximizing the aggregate function may favor
a particular participant~\cite{marler2004survey,messac2000a,messac2000b,das1997closer}\footnote{In general, to avoid favoring a particular participant, the weighting coefficients used in the aggregate function can be set equal to each other and normalized objective functions can be used. However, such a method is invalid when the MOP in which the objectives  form the aggregate function of participants has a nonconvex Pareto front (PF).
 Even if the PF is convex, other difficulties can be introduced by using aggregate functions derived from different weights to approximate the PF, as discussed in the references.}, e.g., the power grid or microgrids, depending on the weighting coefficients used in the aggregate function and on the conflicting relationship among objectives of participants.
In a fair and transparent setting  joined by active participants, this bias should be generally avoided.

To address these drawbacks, we propose a multiobjective approach to multi-microgrid system design.
Three objectives are considered, introducing a three-dimensional objective function space\footnote{We use the terms decision variable space and objective function space to describe
the domain and codomain, respectively,  of the vector-valued objective function consisting of the three  scalar-valued  objective functions.}.
The first objective is to maximize the overall net value derived from consuming power at microgrids, i.e.,  the utility maximization for the microgrids.
The second objective is to maximize net revenue derived from providing power at the power grid, i.e., the utility maximization for the power grid.
Finally, to secure the emergency operation,
the third objective is to maximize
a sum of the  stored energy levels within the multi-microgrid network,
 corresponding to maximizing the interest of the ISO.
The consideration of the third objective in our formulation  addresses  the first drawback  mentioned previously.

The proposed multiobjective approach leads to solving an MOP.
 Since multiple objectives are involved, Pareto domination is adopted.
A multiobjective immune algorithm (MOIA) is developed to solve the problem by
searching for feasible points in the decision variable space that represent prices and the amount of power distributed to microgrids.
 During the solving process, dominated or infeasible points are gradually removed.
 In other words, nondominated and feasible points are maintained, yielding a set of approximated Pareto optimal solutions at the end of the process.
Each solution associates with an approximated Pareto optimal design.
 The whole set corresponds to an approximated Pareto front (APF).
 The APF is of importance because it can clearly illustrate how one objective affects the others, which
  cannot be achieved when an SOP is formulated and solved for utility maximization.
The ability to produce the APF addresses the second drawback.

Based on the APF, a design for the DNO and MO that does not favor a particular participant can be derived.
In our framework, if a vector on the APF has an entry with an extreme value, then the associated design favors one particular participant.
To achieve a fair design, we fully explore the obtained APF:
An optimization process is performed over the improvement of the associated objective function values in three dimensions.
The vector on the APF that maximizes the minimum value of the normalized improvement in all dimensions is selected.
The associated solution is then used to characterize the design.
The exploration of the APF that leads to a fair design addresses the third drawback.

 The main contributions of this paper are summarized as follows.
 The proposed methodology for multi-microgrid system designs can avoid the drawbacks introduced by using aggregate functions.
 To the best of our knowledge, in recent literatures related to the multi-microgrid system design, few efforts have been devoted to addressing a multiobjective
approach for price generation and power distribution. We thus propose a multiobjective formulation, which can provide a framework for future exploration of multiobjective methodology in related fields.
As the formulation leads to solving  an MOP,
we develop an MOIA  so that an APF can be produced.
Finally, we devise a simple method that can yield
 a fair multi-microgrid system design based on the APF obtained.

The rest of this paper is organized as follows. Section~\ref{sec_SystModel} describes
the  network components. In Section~\ref{sec_ProbFormu}, a multiobjective formulation is proposed, leading to an MOP, and related analysis is performed.
In Section~\ref{sec_design}, an algorithm that can solve the MOP is developed. Simulation results are presented in  Section~\ref{sec_sim}. Finally, Section~\ref{sec_con} concludes this paper.

\begin{figure}
  \centering
  \includegraphics[width=8.5cm]{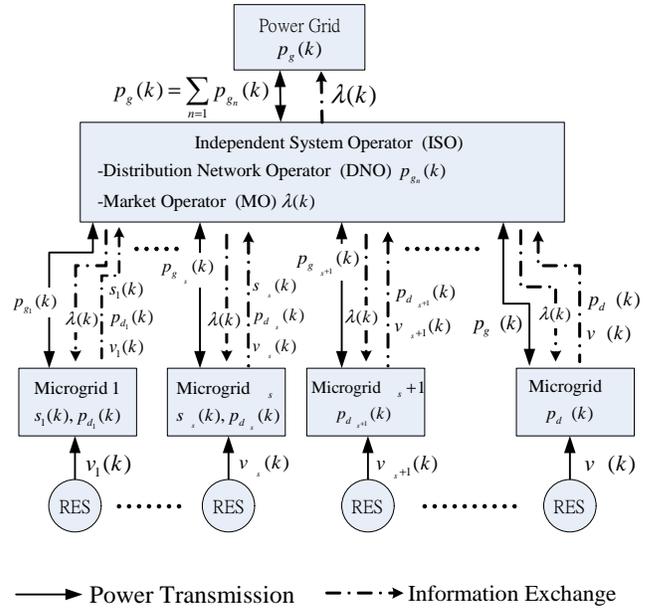}\\
  \caption{System operation model of a network of microgrids. }\label{fig_Control_level}
\end{figure}

\section{System Model}\label{sec_SystModel}

\begin{table}
  \centering
  \caption{Symbols}\label{tab_notation}
  \hspace{-0.2cm}
\begin{tabular}{cl}
  \hline
\hline
  Symbol & Description \\
\hline
  $N$ & the number of microgrids \\
  $\mathcal{N}$ &  the set of microgrid indices, i.e., $\mathcal{N}=\{1,2,...,N\}$ \\
  $N_s$ & the number of microgrids that have a local energy \\
  &    storage system\\
  $\mathcal{N}_s$ & the index set for the microgrids that have a local \\
  &   energy storage system (we assume $\mathcal{N}_s=\{1,2,...,N_s\}$)\\
  $k$ & time index \\
$s_n(k)$  & stored energy level of $n$th microgrid at time $k$\\
$\bar{s}_n$ and $\underline{s}_n$ & maximum storage capacity and secure energy level  for\\
&   emergency operation, respectively  ($\bar{s}_n\geq     s_n(k) \geq \underline{s}_n>0$) \\
$\Delta s_n$  & maximum rate of storage charging and discharging \\
&($| s_n(k+1)- s_n(k)| \leq \Delta s_n$) \\
$v_n(k)$  & power supply from nondispatchable RESs to $n$th microgrid \\
&    at time $k$  \\
 $p_{g_n}(k)$  &  power distribution between the power grid and the $n$th\\
& microgrid at time $k$  \\
$p_g(k)$  &   total power distribution, i.e.,  $p_g(k)=\sum_{n=1}^N p_{g_n}(k)$\\
 $p_{d_n}(k)$  & power demand of the $n$th microgrid at time $k$ \\
$\lambda (k)$ & market price at time $k$ \\
$U_g(\cdot)$ &  utility function of the power grid \\
$U_d(\cdot)$ &  utility function of microgrids \\
$U_c(\cdot)$ &  constraint function \\
$[\cdot]_i$ & the $i$th entry of a vector  \\
$:=$ & assignment operator  \\
  \hline
\hline
\end{tabular}
\end{table}

This section describes the system operation model of a network of $N$ microgrids in Fig.~\ref{fig_Control_level}.
Without loss of generality, we assume that there is no power exchange between microgrids because
if some microgrids are connected with links and can exchange power, then we simply combine them as one microgrid.
An information and communication technology (ICT) system has been implemented so that network information, e.g., the price signal and power demand, can be exchanged among the microgrids,  power grid, and  ISO, consisting of the MO and DNO.
In this study, we consider a high-level multi-microgrid system design.
Because the balance of current and voltage is managed at a lower level,
the associated balance equations are not explicitly shown in the following discussions.

For clarity,
the section is divided into three subsections, and Table~\ref{tab_notation}  summarizes the notation used throughout this paper.
Sections~\ref{sub_micro}, \ref{sub_power_grid}, and \ref{sub_ISO} describe mathematical models of the microgrids, power grid, and ISO, respectively.
The respective utility maximization problem is examined.

\subsection{Microgrids}\label{sub_micro}

Let $\mathcal{N}=\{1,2,...,N\}$ be the set of microgrid indices and $\mathcal{N}_s$ be the index set
for the microgrids that have a local energy storage system.
Without loss of generality, we assume $\mathcal{N}_s=\{1,2,...,N_s\}$ where $N_s\leq N$.
Referring to Fig.~\ref{fig_Control_level}, we let
 $p_{g_n}(k)$ denote the power transmission between the power grid and  microgrid $n$.
 If  $p_{g_n}(k)>0$, then microgrid $n$ receives power from the power grid; otherwise,
 microgrid $n$ sells power to the grid.

If $n \in \mathcal{N} \setminus \mathcal{N}_s$, i.e., microgrid $n$ does not possess an energy storage system, then
we have
 \begin{equation}\label{eq_no_storage}
  p_{g_n}(k)- p_{d_n}(k) +v_n(k)=0
\end{equation}
where $p_{d_n}(k)>0$ represents the power demand, and~$v_n(k)>0$
represents the power generated from the RESs, such as solar panels or wind turbines.
In our model, we consider microgrids that have only nondispatchable RES penetration (without dispatchable DG units).

If $n \in \mathcal{N}_s$, then
$s_n(k)$  denotes  the stored energy at microgrid $n$ and  satisfies
\begin{equation}\label{eq_storage_capacity}
0 \leq  s_n(k)\leq \bar{s}_n
\end{equation}
where~$\bar{s}_n$ represents the maximum storage capacity.
The maximum capacity is determined by the media used to store energy.
For instance, if batteries are used to construct the energy storage system,
then the maximum capacity may depend on the chemicals in the batteries and the size of the batteries.
The dynamics of the energy storage system can be expressed as
 \begin{equation}\label{eq_s}
    s_n(k+1)= s_n(k)+  p_{g_n}(k)- p_{d_n}(k) +v_n(k).
\end{equation}
The associated limits on the rate of charging and discharging can be modeled as
 \begin{equation}\label{eq_charge_rate}
   | s_n(k+1)- s_n(k)| \leq \Delta s_n.
 \end{equation}

In grid operation, we consider shiftable loads~\cite{shift1,DSM1} and model the power demand $p_{d_n}(k)$  in~(\ref{eq_no_storage}) and~(\ref{eq_s}) as
 \begin{equation}\label{eq_p_d}
      p_{d_n}(k)= f_{d_n}(\lambda(k),b_n(k))
\end{equation}
where $b_n(k)>0$ represents a nominal value of the base load.
Since in practice base loads have little elasticity, high-accuracy  load forecasting can be attained and we thus assume
$b_n(k)$ in~(\ref{eq_p_d}) is a known quantity.

To maximize the overall utility of microgrids, we consider
\begin{equation}\label{eq_Obj_d}
    \max_{\lambda(k)}  \;U_{d}(p_{d_1}(k), \ldots ,p_{d_N}(k),\lambda(k))
\end{equation}
where the utility function~$U_{d}(\cdot)$ in~(\ref{eq_Obj_d}) represents the net value derived from consuming power~$p_{d_n}(k),n=1,2,\ldots,N,$ when the price is~$\lambda(k)$.

\subsection{Power Grid}\label{sub_power_grid}

Let~$p_{g_n}(k)$ be the power distributed between the power grid and  microgrid $n$, and denote
\begin{equation}\label{eq_dis}
    p_g(k)=\sum_{n=1}^N p_{g_n}(k).
\end{equation}
Let~$U_{g}(p_{g}(k),\lambda(k))$ denote  the utility function of the power grid.
To maximize the interest of the power grid, we consider
\begin{equation}\label{eq_Obj_g}
    \max_{p_{g_n}(k),\lambda(k)} \;U_{g}(p_{g}(k),\lambda(k)).
\end{equation}
The utility function~$U_{g}(\cdot)$ in~(\ref{eq_Obj_g})
is interpreted as the net value derived from~$p_{g}(k)$ when the price is~$\lambda(k)$.

\subsection{ISO}\label{sub_ISO}

Eastern U.S. ISOs such as New York ISO and PJM Interconnection provide emergency demand  response programs~\cite{cappers2010demand}.
To feature those programs in our model,
we determine $\lambda(k)$ and $p_{g_n}(k)$
by solving
\begin{equation}\label{eq_Obj_iso}
\begin{split}
  \max_{\lambda(k),p_{g_n}(k)}  &    \; \sum_{n =1   }^{N_s} s_n(k+1) \\
  \mbox{subject to }  & \; \underline{s}_n \leq s_n(k+1) < \bar{s}_n,n=1,2,...,N_s
\end{split}
\end{equation}
where $\underline{s}_n>0$ represents the minimum energy level required for emergency operation.
In~(\ref{eq_Obj_iso}), maximizing the sum of stored energy levels
 aims to  store as much energy within the multi-microgrid network as possible so that  a safer secure level can be achieved.

\begin{rmk}\label{rmk_sn_func}
According to~(\ref{eq_s}),
 the value of $s_n(k+1)$ depends on the values of~$p_{g_n}(k)$ and  $p_{d_n}(k)$, and
the value of $p_{d_n}(k)$ further depends on the value of $\lambda(k)$, as shown in~(\ref{eq_p_d}).
Therefore, the distributed power $p_{g_n}(k), n=1,2,...,N_s,$ and the price $\lambda(k)$
are the decision variables in~(\ref{eq_Obj_iso}).
\end{rmk}

\begin{rmk}
A scenario related to the model described by Fig.~\ref{fig_Control_level} is discussed as follows.
In a few universities of Taiwan, students who live in dormitories are required to pay their own electricity bills for the use of washing machines, dryers, air conditioner, light, etc. At present, a fixed price for electricity is adopted.
If the price can change over time
and is transparent to students, then
students may adjust their behavior so that
less/more power is consumed when the price is high/low.
In this case, shiftable load, e.g., running washing machines and dryers, can be rearranged.
A win-win situation can thus be created: universities can benefit
 by leveling the load curve, which reduces costs and energy losses within the campus,
  while the students can reduce their expenses.

In the scenario,
 universities act as an agent who buys electricity from a power company and then sells it to students with a time-varying price.
A dormitory implemented with batteries
can be regarded as a microgrid, which possesses certain storage capacity.
Some universities of Taiwan already have solar panels installed on rooftops, which connects
microgrids with RESs.
Most students in Taiwan have smart phones, providing a foundation for building up
the underlying ICT system.
A real-time price for electricity can thus be readily accessed if desired.
Once an advanced metering infrastructure (AMI) system has been implemented in universities,
with the help of the aforementioned facilities, the office of general affairs can function as the ISO.
(As a matter of fact, Yuan Ze University in Taiwan already possesses an AMI system~\cite{EMS}. In the UK, smart meters are being massively rolled out. By 2020, British homes can start to benefit  from smart pricing~\cite{OFGEM,UK_policy}.
In the US,  experimental projects using online feedback systems to help reduction in demand in dormitories have been launched~\cite{petersen2007dormitory,brewer2011kukui,odom2008social}.)
Therefore, the model is practical and suits the smart grid environment that involves active participants.

\end{rmk}

\section{Problem Formulation and Related Analysis}\label{sec_ProbFormu}

This section presents a multiobjective formulation for the multi-microgrid system design, leading to an MOP.
Related analysis on the MOP is performed. In our scenario, a real-time market settlement is considered, and the associated optimization is performed hourly.
 In this setting, the RES output $v_n(k)$ can be forecasted with high accuracy and thus is assumed to be known during the optimization process.

To address the objectives (\ref{eq_Obj_d}), (\ref{eq_Obj_g}), and~(\ref{eq_Obj_iso})   simultaneously,
we formulate the multi-microgrid system design problem
 as the MOP
\begin{equation}\label{eq_MO}
 \begin{split}
  &   \min_{\lambda(k)} \;   -U_{d}(p_{d_1}(k), \ldots ,p_{d_N}(k),\lambda(k)) \\
  & \min_{\lambda(k),p_{g_n}(k)} \;    -U_{g}(p_{g}(k),\lambda(k))\\
  &  \min_{\lambda(k),p_{g_n}(k)} \;    - \sum_{n =1}^{N_s} s_n(k+1)   \\
 \end{split}
\end{equation}
subject to
\begin{equation}\label{eq_overall_s_cst}
     | s_n(k+1)- s_n(k)| \leq \Delta s_n, \underline{s}_n \leq  s_n(k+1)\leq \bar{s}_n, n \in \mathcal{N}_s \\
\end{equation}
 where
\begin{equation*}
\begin{split}
&   s_n(k+1)= s_n(k)+  p_{g_n}(k)- p_{d_n}(k) +v_n(k) , n\in \mathcal{N}_s \\
&   \underline{s}_n  \leq  s_n(0)\leq \bar{s}_n , n\in \mathcal{N}_s \\
&   p_{g_n}(k)- p_{d_n}(k) +v_n(k)=0  , n\in \mathcal{N}\setminus \mathcal{N}_s  \\
&   p_{d_n}(k)= f_{d_n}(\lambda(k),b_n(k)) \mbox{ and }\\
&   p_g(k)=\sum_{n=1}^N  p_{g_n}(k) .
\end{split}
\end{equation*}

In~(\ref{eq_MO}), the decision variables~$\lambda(k)$ and $p_{g_n}(k),n=1,2,\ldots,N_s,$ related to the MO and DNO, respectively, must be determined.
When the dynamic pricing, RESs, and energy storage system are implemented during the same period,
it is reasonable to consider the implementation costs of the RESs and the storage system when the price is adjusted dynamically.
However, if they were constructed in different projects launched in distinct time periods, then
 those costs and the dynamic pricing may become  less relevant.
In our formulation, we consider the situation in which
infrastructures such as the RESs and energy storage system already exist.
 We regard the dynamic pricing
as a high-level method for energy management using the existing infrastructures.
The costs of implementing RESs and storage systems are thus not included in the objectives of~(\ref{eq_MO}).

We interpret the  system stability  as the feasibility of the MOP described by~(\ref{eq_MO}) and~(\ref{eq_overall_s_cst}) for all~$k$.
If the MOP is feasible, then it is possible to find a price and a way  to distribute power generated from the power grid to the microgrids in each time slot
so that the underlying physical constraints are satisfied.
Under this interpretation, we have the following result.

\begin{thm}\label{thm_stable}
 The multi-microgrid system is stable.
\end{thm}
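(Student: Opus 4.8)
The plan is to unfold the definition of stability stated just above the theorem: the system is stable precisely when the feasible set of the MOP in (\ref{eq_MO})--(\ref{eq_overall_s_cst}) is nonempty for every time slot $k$. Since the objective functions play no role in feasibility, I would discard them and focus entirely on exhibiting, for each $k$, one admissible choice of the decision variables $\lambda(k)$ and $p_{g_n}(k), n \in \mathcal{N}_s$, that satisfies every constraint. The central observation is that the storage-equipped microgrids carry a free decision variable $p_{g_n}(k)$ which enters the dynamics (\ref{eq_s}) linearly and is itself subject to no explicit bound; consequently $s_n(k+1)$ can be steered to any desired value by an appropriate choice of $p_{g_n}(k)$.

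Building on this, I would argue by induction on $k$. For the base case, the initial-condition constraint $\underline{s}_n \leq s_n(0) \leq \bar{s}_n$ is imposed directly in (\ref{eq_overall_s_cst}), so the state at $k=0$ lies in the admissible band. For the inductive step, suppose $\underline{s}_n \leq s_n(k) \leq \bar{s}_n$ for every $n \in \mathcal{N}_s$. I would pick an arbitrary price $\lambda(k)$, which through (\ref{eq_p_d}) fixes each demand $p_{d_n}(k)$; for the storageless microgrids $n \in \mathcal{N} \setminus \mathcal{N}_s$ the balance (\ref{eq_no_storage}) then determines $p_{g_n}(k) = p_{d_n}(k) - v_n(k)$ with no further restriction. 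For each storage microgrid I would make the \emph{pass-through} choice $p_{g_n}(k) = p_{d_n}(k) - v_n(k)$, which by (\ref{eq_s}) yields $s_n(k+1) = s_n(k)$.

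It then remains to verify that this witness satisfies (\ref{eq_overall_s_cst}). The rate limit holds because $|s_n(k+1) - s_n(k)| = 0 \leq \Delta s_n$, and the capacity constraint $\underline{s}_n \leq s_n(k+1) \leq \bar{s}_n$ holds because $s_n(k+1) = s_n(k)$ and $s_n(k)$ lies in the band by the inductive hypothesis. Hence the feasible set is nonempty at time $k$, and the same hypothesis is restored at $k+1$, closing the induction and establishing feasibility for all $k$.

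I do not expect a genuine obstacle here; the result is essentially structural. The only point requiring care is recognizing why the hold-the-storage-constant trajectory is always admissible---namely, that $p_{g_n}(k)$ is unconstrained and can absorb whatever imbalance the chosen price and renewable output create---and then packaging this into a clean induction so that the initial feasibility propagates forward. If I wanted the argument to be sharper, I might instead show directly that the set of admissible $s_n(k+1)$ is the nonempty interval $[\max\{\underline{s}_n, s_n(k) - \Delta s_n\}, \min\{\bar{s}_n, s_n(k) + \Delta s_n\}]$, which always contains $s_n(k)$, but the single pass-through witness already suffices.
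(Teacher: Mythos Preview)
Your proposal is correct and follows essentially the same approach as the paper: both exhibit the pass-through witness $p_{g_n}(k)=p_{d_n}(k)-v_n(k)$, which freezes the storage at $s_n(k+1)=s_n(k)=s_n(0)$ and thereby keeps all constraints in~(\ref{eq_overall_s_cst}) satisfied for every $k$. Your version is slightly more explicit in casting the argument as an induction and in verifying the rate limit separately, but the underlying idea is identical.
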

\begin{proof}
Consider the power distribution~$p_{g_n}(k)=p_{d_n}(k) -v_n(k)$ for all $k$ and all $n \in \mathcal{N}$.
We thus have
$  s_n(k+1)=s_n(k), n \in \mathcal{N}_s$.
 Since $\underline{s}_n \leq  s_n(0)\leq \bar{s}_n$, the conditions in~(\ref{eq_overall_s_cst}) are satisfied.
Therefore,  regardless of the value of~$\lambda(k)$ being assigned, $\lambda(k)$ and $p_{g_n}(k)=p_{d_n}(k) -v_n(k),n=1,2,...,N_s,$ form a feasible point of the MOP, i.e.,
  the multi-microgrid system is stable.
\end{proof}

To facilitate the solving process of the MOP described by~(\ref{eq_MO}) and~(\ref{eq_overall_s_cst}), we introduce an additional function
 \begin{equation}\label{eq_Uc}
    \begin{split}
  U_c(  \bm{p}(k) ){ }={ }  & \; \sum_{n=1}^{N_s}
\left\{
  \begin{array}{c}
 \max\{ |s_n(k+1)- s_n(k)| -\Delta s_n,0 \}\\
  \end{array}
\right. \\
        & +  \max\{\underline{s}_n-s_n(k+1),0\}\\
   &   + \max\{s_n(k+1)-\bar{s}_n,0 \}\hspace{-0.4cm}
\left.
         \begin{array}{c}
       \\
         \end{array}
       \right\}
    \end{split}
 \end{equation}
to replace the constraints in~(\ref{eq_overall_s_cst}),
where
\begin{equation}\label{eq_p_def}
  \bm{p}(k)  { }={ }
  \left[
                  \begin{array}{ccccc}
                  \lambda(k) & p_{g_1}(k)  &  p_{g_2}(k) & \cdots &  p_{g_{N_s}}(k) \\
                  \end{array}
                \right]^T.
\end{equation}
In~(\ref{eq_Uc}), $U_c(\cdot)$ is a function of~$\bm{p}(k)$ because the function value depends on the stored energy levels that are affected by~$\bm{p}(k)$, as discussed in Remark~\ref{rmk_sn_func}.
A point $\bm{p}(k)$ satisfies the conditions in~(\ref{eq_overall_s_cst})
if and only if $U_c(  \bm{p}(k) )=0$.
With the help of~(\ref{eq_Uc}), we
define the vector-valued function
\begin{equation}\label{eq_vec_obj}
\begin{split}
\bm{F}(\bm{p}(k)){ }={ } &
\left[
   \begin{array}{cc}
    -U_{d}( \bm{p}(k)) &  -U_{g}(\bm{p}(k)) \\
   \end{array}
 \right.
 \\
    &
    \left.
      \begin{array}{cc}
    -\sum_{n=1}^{N_s}  s_n(k+1) & U_c(\bm{p}(k)) \\
      \end{array}
    \right]^T
\end{split}
\end{equation}
and introduce
Pareto optimality~\cite{ChiuCP13,MOEA}  for the ensuing discussions.

\begin{defi}[Pareto dominance]\label{def_domi}
Let~$[\cdot]_i$ denote the $i$th entry of a vector.
Consider an MOP with~$\bm{H}$ as the vector-valued objective function.
In the objective function space, a vector $\bm{u}$  dominates  another vector
$\bm{v}$ (denoted by $\bm{u} \preceq \bm{v} $)
if the condition $[\bm{u}]_i \leq [\bm{v}]_i$  holds true for all~$i$ and at least one inequality is strict.
In the decision variable space, a point
$\bm{a}$ dominates another point $\bm{b}$ with respect to~$\bm{H}$
if    $\bm{H}(\bm{a})\preceq \bm{H}(\bm{b})$.
\end{defi}

\begin{defi}[Pareto optimal solution]\label{def_P_opt}
 A point $\bm{p}$ in the decision variable space is a Pareto optimal solution if
 $\bm{p}$ is feasible and there does not exist a feasible point that dominates it.
\end{defi}

\begin{defi}[Pareto optimal set and Pareto front]\label{def_P_opt}
The set of all Pareto optimal solutions is termed the Pareto optimal set.
The image of the Pareto optimal set through the objective function~$\bm{H}$ is termed the Pareto front.
\end{defi}

The following theorem shows the equivalency of the MOP described by~(\ref{eq_MO}) and~(\ref{eq_overall_s_cst}) and an unconstrained MOP that has the vector objective function in~(\ref{eq_vec_obj}).

\begin{thm}\label{thm_Pareto_eq}
A point $\bm{p}^*(k)$ is a  Pareto optimal solution of
\begin{equation}\label{eq_MO2}
    \min_{\bm{p}(k)} \;  \bm{F}(\bm{p}(k)) \\
\end{equation}
and satisfies the condition~$U_c(\bm{p}^*(k))=0$
if and only if (``$\Leftrightarrow$'') $\bm{p}^*(k)$ is a Pareto optimal solution of~the MOP described by~(\ref{eq_MO}) and~(\ref{eq_overall_s_cst}).
\end{thm}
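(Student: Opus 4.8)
The plan is to prove the equivalence by unwinding the definition of Pareto dominance (Definition~\ref{def_domi}) and exploiting two structural facts about $U_c$. First, by its construction in~(\ref{eq_Uc}) as a sum of terms of the form $\max\{\cdot,0\}$, we have $U_c(\bm{p}(k))\geq 0$ for every point $\bm{p}(k)$ in the decision variable space. Second, as already noted after~(\ref{eq_Uc}), $U_c(\bm{p}(k))=0$ holds if and only if $\bm{p}(k)$ satisfies the constraints~(\ref{eq_overall_s_cst}); in other words, the feasible set of the constrained MOP is exactly the zero level set of the fourth entry of $\bm{F}$. For convenience I would write $\bm{G}(\bm{p}(k))$ for the vector formed by the first three entries of $\bm{F}(\bm{p}(k))$, so that $\bm{G}$ is precisely the vector objective of the constrained problem~(\ref{eq_MO}) while the fourth entry of $\bm{F}$ is $U_c(\bm{p}(k))$.

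For the forward direction (``$\Rightarrow$''), I would assume $\bm{p}^*(k)$ is Pareto optimal for the unconstrained problem~(\ref{eq_MO2}) and satisfies $U_c(\bm{p}^*(k))=0$. The latter condition makes $\bm{p}^*(k)$ feasible for the constrained MOP. Arguing by contradiction, suppose some feasible point $\bm{q}$ dominates $\bm{p}^*(k)$ in the three-objective sense, i.e.\ $\bm{G}(\bm{q})\preceq\bm{G}(\bm{p}^*(k))$. Feasibility of $\bm{q}$ gives $U_c(\bm{q})=0=U_c(\bm{p}^*(k))$, so the fourth entries coincide; appending this common fourth coordinate to the three-component domination yields $\bm{F}(\bm{q})\preceq\bm{F}(\bm{p}^*(k))$, where all four entries satisfy ``$\leq$'' and the required strict inequality is inherited from the first three. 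This contradicts Pareto optimality of $\bm{p}^*(k)$ for~(\ref{eq_MO2}), so no such $\bm{q}$ exists.

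For the reverse direction (``$\Leftarrow$''), I would assume $\bm{p}^*(k)$ is Pareto optimal for the constrained MOP, so $\bm{p}^*(k)$ is feasible and hence $U_c(\bm{p}^*(k))=0$; this already supplies the side condition in the theorem. Again by contradiction, suppose some point $\bm{q}$ dominates $\bm{p}^*(k)$ with respect to $\bm{F}$, i.e.\ $\bm{F}(\bm{q})\preceq\bm{F}(\bm{p}^*(k))$. Inspecting the fourth entry gives $U_c(\bm{q})\leq U_c(\bm{p}^*(k))=0$, and nonnegativity of $U_c$ forces $U_c(\bm{q})=0$, so $\bm{q}$ is feasible for the constrained problem.

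The step I expect to be the delicate one --- and the only place the argument is more than bookkeeping --- is then locating the strict inequality demanded by Definition~\ref{def_domi}. Since both fourth entries equal $0$, the strict inequality among the four components cannot reside in the $U_c$ coordinate and must therefore occur in one of the first three, giving $\bm{G}(\bm{q})\preceq\bm{G}(\bm{p}^*(k))$ with $\bm{q}$ feasible, which contradicts Pareto optimality of $\bm{p}^*(k)$ for the constrained MOP. This pinning of the fourth coordinate to zero at any candidate optimum --- a direct consequence of the two structural facts above --- is what collapses the four-objective domination exactly onto the three-objective domination over the feasible set, and is the crux underlying both implications.
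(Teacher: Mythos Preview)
Your proposal is correct and follows essentially the same contraposition strategy as the paper, including the same decomposition $\bm{F}=[\bm{G}^T,\,U_c]^T$. If anything, your reverse direction is slightly more explicit than the paper's: you actually \emph{derive} $U_c(\bm{q})=0$ for the putative dominator from the nonnegativity of $U_c$ and the inequality $U_c(\bm{q})\leq U_c(\bm{p}^*(k))=0$, and you then argue where the strict inequality must live, whereas the paper simply posits a dominator with $U_c(\bm{p}')=0$ without spelling out that intermediate step.
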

\begin{proof}
``$\Rightarrow$'' Since~$\bm{p}^*(k)$ satisfies the condition~$U_c(\bm{p}^*(k))=0$,  $\bm{p}^*(k)$ is a feasible point of the MOP described by~(\ref{eq_MO}) and~(\ref{eq_overall_s_cst}).
Let us denote
\begin{equation}\label{eq_G}
\bm{F}(\bm{p}(k))=
\left[
   \begin{array}{cc}
   \bm{G}^T (\bm{p}(k)) &  U_c(\bm{p}(k))   \\
   \end{array}
 \right]^T
\end{equation}
 in~(\ref{eq_vec_obj}).

We proceed by contraposition. Suppose that there exists a feasible point $\bm{p}'$ dominating $\bm{p}^*(k)$ in the MOP described by~(\ref{eq_MO}) and~(\ref{eq_overall_s_cst}), i.e., $ \bm{G}( \bm{p}') \preceq \bm{G}( \bm{p}^*(k))$.
Because $U_c(\bm{p}^*(k))=U_c(\bm{p}')=0$, we have~$\bm{F}(\bm{p}')  \preceq \bm{F}( \bm{p}^*(k)) $, which contradicts the fact that
$\bm{p}^*(k)$ is a  Pareto optimal solution of~(\ref{eq_MO2}). \\
``$\Leftarrow$''  Since $\bm{p}^*(k)$ satisfies the conditions in~(\ref{eq_overall_s_cst}), we have~$U_c(\bm{p}^*(k))=0$.
It suffices to show that $\bm{p}^*(k)$ is  Pareto optimal in~(\ref{eq_MO2}).
We use contraposition. Suppose that  there exists a point $\bm{p}'$ which satisfies the condition~$U_c(\bm{p}')=0$
and dominates  $\bm{p}^*(k)$, i.e., $ \bm{F}( \bm{p}') \preceq \bm{F}( \bm{p}^*(k))$.
Since $U_c(\bm{p}^*(k))=U_c(\bm{p}')=0$, we have~$ \bm{G} ( \bm{p}')\preceq \bm{G}( \bm{p}^*(k))$ in which $ \bm{G}$ is defined in~(\ref{eq_G}).
However, this gives contradiction because  $\bm{p}^*(k)$ is Pareto optimal in the MOP described by~(\ref{eq_MO}) and~(\ref{eq_overall_s_cst}).
\end{proof}

Based on Theorem~\ref{thm_Pareto_eq},  we can solve~(\ref{eq_MO2}) for a set of Pareto optimal solutions, and then
remove the points~$\bm{p}$ that  violate the condition~$U_c(\bm{p})=0$  to obtain Pareto optimal solutions of the MOP described by~(\ref{eq_MO}) and~(\ref{eq_overall_s_cst}).
The remaining work is to develop an algorithm to solve the MOP in~(\ref{eq_MO2}), which is addressed in the next section.

\section{Proposed Algorithm for Multi-microgrid System Design}\label{sec_design}

 Artificial immune system (AIS) algorithms
 have been proven successful in searching for Pareto optimal solutions~\cite{AIS1,AIS2}.
In this section, we adopt their basic structures to develop our algorithm used to solve~(\ref{eq_MO2}).
The terminology in an AIS is thus used: a point in the decision variable space is termed an antibody.

Here is a brief discussion on how the proposed MOIA works.
Our algorithm uses gene operations to preserve the diversity of
antibodies so that the search space can be explored.
During the iteration, dominated antibodies are removed gradually and thus nondominated antibodies are maintained in the population.
At the end of the iteration, nondominated antibodies with~$U_c(\bm{p}(k))=0$ serve as the approximation of Pareto optimal solutions.
 The solution that maximizes the minimum improvement (after normalization) in all dimensions among the approximated Pareto optimal solutions is selected as the output.
Price generation and power distribution are performed accordingly.

Fig.~\ref{fig_HMIA} presents the  pseudocode of the proposed algorithm, which
is performed at each $k$. To shorten our notation, we omit the time index~$k$ when referring to antibodies.
 The set $\mathcal{A}(t_c)$ denotes the current population, and its size is denoted by $|\mathcal{A}(t_c)|$. The $N_{nom}$ and  $N_{max}$ represent the nominal and maximum population sizes, respectively. The $t_c$ and $t_{max}$ represent the algorithm counter and the maximum iteration number, respectively.
The assignment operator ``$:=$'' is used, e.g., $A:=B$ means that we assign a new value $B$ to $A$.
Detailed discussions on the pseudocode are given as follows.

\begin{figure}
\hrulefill\\
\textbf{Input}:
\begin{itemize}
  \item MOP~(\ref{eq_MO2}).
   \item $N_{nom}$, $N_{max}$,  and $t_{max}$.
\end{itemize}
\emph{Step 1)} Initialize population $\mathcal{A}(0)$ over $[\bm{\underline{p}},\bar{\bm{p}}]$.  \\
\emph{Step 2)} Remove dominated antibodies from  $\mathcal{A}(0)$. Let $t_c=0$.

\textbf{While} $t_c\leq t_{max}$
\begin{itemize}
\item[] \emph{Step 3)} Apply gene operation to $\mathcal{A}(t_c)$ over $[\bm{\underline{p}},\bar{\bm{p}}]$.
\item[] \emph{Step 4)} Remove the antibody~$\bm{p}$ that yields the highest positive value of $U_c(\bm{p})$ from~$\mathcal{A}(t_c)$ successively until
                        the condition
                        \begin{equation*}
                          |\mathcal{A}(t_c)|\leq N_{nom}
                          \mbox{ or }
                        U_c(\bm{p})=0   \; \forall \bm{p}\in \mathcal{A}(t_c)
                        \end{equation*}
                       holds true.
\item[] \emph{Step 5)} Remove dominated antibodies from  $\mathcal{A}(t_c)$.
\item[] \emph{Step 6)} Remove the antibody that yields the least fitness from $\mathcal{A}(t_c)$ successively until the condition $|\mathcal{A}(t_c)|\leq N_{nom}$ holds true.
\end{itemize}
Let $\mathcal{A}(t_c+1):= \mathcal{A}(t_c)$ and $t_c:=t_c+1$.\\
\textbf{ End While} \\
\emph{Step 7)} Remove  antibodies~$\bm{p}$ that yield~$U_c(\bm{p})>0$  from~$\mathcal{A}(t_{max})$.\\
\textbf{Output}:
\begin{itemize}
  \item The antibody
\begin{equation*}
\bm{p}^*=\left[
           \begin{array}{ccccc}
            \lambda^* & p_{g_1}^*  &  p_{g_2}^* & \cdots &  p_{g_{N_s}}^* \\
           \end{array}
         \right]^T
\end{equation*}
that maximizes the minimum normalized improvement
in all dimensions.
\end{itemize}
\hrulefill
  \caption{Pseudocode of the proposed MOIA for the MO and DNO design.}\label{fig_HMIA}
\end{figure}

Step 1: Randomly generate the  initial population
          \begin{equation}\label{eq_ini_pop}
          \mathcal{A}(0)=\{\bm{p}_1,\bm{p}_2,...,\bm{p}_{ N_{nom}} \}
          \end{equation}
          where~$\bm{p}_j$   is a random vector  over       $[\bm{\underline{p}},\bar{\bm{p}}]$.
      The lower bound~$\bm{\underline{p}}$ and upper bound~$\bar{\bm{p}}$ can be set using~(\ref{eq_s}) and~(\ref{eq_charge_rate}), which will be illustrated in our simulations.

Steps 2 and 5:
Dominated antibodies with respect to~$\bm{F}$ are removed and nondominated antibodies are kept.
In this way, nondominated vectors~$\bm{F}(\mathcal{A}(t_c))$  can gradually approach the Pareto front as the algorithm counter~$t_c$ increases.

Step 3: Let  $N_{p}(t_c)=|\mathcal{A}(t_c)|$.
By applying the gene operation to the current population
\begin{equation}\label{eq_current_pop}
\mathcal{A}(t_c)=\{ \bm{p}_1,\bm{p}_2,\ldots,\bm{p}_{ N_{p}(t_c)} \}
\end{equation}
 we obtain
a set of newly produced antibodies denoted by
\begin{equation}\label{eq_C}
\begin{split}
\mathcal{C}{ }={ }  & \{ \bm{p}_1^1, \bm{p}_1^2  ,\ldots, \bm{p}_1^{ R(t_c)-1}    \} \cup \{ \bm{p}_2^1,\bm{p}_2^2,\ldots , \bm{p}_2^{ R(t_c)-1}\}\cup\ldots  \\
   & \cup \{\bm{p}_{ N_{p}(t_c)}^1,\bm{p}_{ N_{p}(t_c)}^2 ,\ldots,\bm{p}_{ N_{p}(t_c)}^{ R(t_c)-1} \}
\end{split}
\end{equation}
where
\begin{equation*}
 R(t_c)= \llcorner N_{max} / N_{p}(t_c) \lrcorner
\end{equation*}
represents the  clonal rate ($\llcorner \cdot \lrcorner$ represents the floor function).
After the gene operation, we
let
\begin{equation*}
  \mathcal{A}(t_c):=\mathcal{A}(t_c)\cup \mathcal{C} .
\end{equation*}

From~(\ref{eq_current_pop}) to~(\ref{eq_C}), each $\bm{p}_{i}$  is cloned  and then
 mutates to  $\bm{p}_{i}^j$. The mutant~$\bm{p}_{i}^j$ is constructed according to
\begin{equation}\label{eq_hyper-cross}
   \bm{p}_{i}^j { }={ }
    \delta \bm{p}_{i}+(1-\delta)\bm{p}_{i'}
\end{equation}
where $\delta$ represents a random number from~$[0,1]$
and  $\bm{p}_{i'}$ is
a random vector over $[\bm{\underline{p}},\bm{\bar{p}} ]$.

  Steps 4--6:  To maintain a manageable size of the population~$\mathcal{A}(t_c)$,
                we must remove low-quality antibodies. In Step~4, antibodies that have $U_c(\bm{p})>0$, i.e.,  infeasible antibodies for the MOP described by~(\ref{eq_MO}) and~(\ref{eq_overall_s_cst}),
                  are removed from the population successively. The removing procedure is performed as follows:
                 If $U_c(\bm{p}_1)>U_c(\bm{p}_2)>0$, then $\bm{p}_1$ is removed first. The procedure stops when all the antibodies~$\bm{p}$ with $U_c(\bm{p})>0$   have been removed or
                 the size of the population becomes its nominal size~$N_{nom}$.
                 After steps 4 and 5, if $|\mathcal{A}(t_c)|$ is still too large, then in~Step~6 we adopt the antibody population updating process proposed in~\cite{AIS1} to shrink the size of~$\mathcal{A}(t_c)$. In~\cite{AIS1}, antibodies are assigned with smaller fitness values when the associated objective vectors  are in a crowded region and are not  ``end'' vectors in $\bm{F}(\mathcal{A}(t_c))$.

   Step 7: Since infeasible antibodies of the MOP described by~(\ref{eq_MO}) and~(\ref{eq_overall_s_cst}) are not desired and they can be identified as those~$\bm{p}$ that have  $U_c(\bm{p})>0$, we remove them
    from the population~$\mathcal{A}(t_{max})$  in this step.

Output: The  solution $\bm{p^*}$ is selected according to
\begin{equation}\label{eq_knee}
 \bm{p}^*=\arg \; \max_{ \bm{p}  \in \mathcal{A}(t_{max})}    \;   \min_{j=1,2,3}  \frac{\bar{F}_j -  [\bm{F}(\bm{p})]_j }
 {\bar{F}_j -\underline{F}_j}
\end{equation}
where
\begin{equation}\label{eq_extreme}
    \bar{F}_j = \max_{ \bm{p}  \in \mathcal{A}(t_{max})}    \;    [\bm{F}(\bm{p})]_j   \mbox{ and }  \underline{F}_j = \min_{ \bm{p}  \in \mathcal{A}(t_{max})}    \;    [\bm{F}(\bm{p})]_j.
\end{equation}
The ratio related to index $j$ in~(\ref{eq_knee}) can represent the normalized improvement in the $j$th dimension~\cite{IET_CTA_14}.
The antibody $\bm{p^*}$ selected by~(\ref{eq_knee}) can be characterized as follows:
 it maximizes the minimum improvement in order not to advantage one particular objective.

\section{Numerical Results}\label{sec_sim}

In this section, we describe numerical simulations that have been carried out to illustrate our proposed multiobjective methodology.
Suppose that we have $N=3$ and $N_s=2$, i.e., 3 microgrids in a network and among them, 2 microgrids possess energy storage systems.
Let $\bar{s}_1=250$ and $\bar{s}_2=200$ (kWh) be the maximum storage capacity in~(\ref{eq_storage_capacity}).
Let $\Delta s_n = 10\%  \bar{s}_n $ be the limits on the rate in~(\ref{eq_charge_rate}) for $n=1,2$,
and~$\underline{s}_1=125$ and $\underline{s}_2=100$ (kWh) be the secure energy levels in~(\ref{eq_Obj_iso}).

For the microgrids, the power demand function~$f_{d_n}(\cdot)$ in~(\ref{eq_p_d}) has been set as
\begin{equation}\label{eq_pd_func}
  f_{d_n}(\lambda(k),b_n(k))= ( 1+h_n(\lambda(k))) b_n(k)
\end{equation}
where
\begin{equation}\label{eq_h_func}
\begin{split}
 h_1(\lambda(k)){ } ={ } & 0.01 \lambda(k)^2-0.12\lambda(k) +0.26   \\
 h_2(\lambda(k)){ } ={ } & -0.01 \lambda(k)^2+0.13    \\
h_3(\lambda(k)){ } ={ } &    -0.01  \lambda(k)^2+0.02\lambda(k)+0.08    .
\end{split}
\end{equation}
In~(\ref{eq_h_func}),
the values of the functions $h_n(\cdot)$  are positive/negative (in percentage)
at a low/high price~$\lambda(k)$.
See Fig.~\ref{fig_h1} for a graphic illustration of~$h_1$.
Therefore,
the power demand in~(\ref{eq_pd_func}) increases/decreases by  $|h_n(\lambda(k))| b_n(k)$ when a low/high price signal occurs.
The curve in Fig.~\ref{fig_h1} is termed a demand curve~\cite{demand_curve1,demand_curve2} and in practice, it can be constructed using collected data
and the regression analysis.
The base load~$b_n(k)$ from~\cite{EMS} has been used, shown in Fig.~\ref{fig_base_load}.

\begin{figure}
  \centering
  \includegraphics[width=7cm]{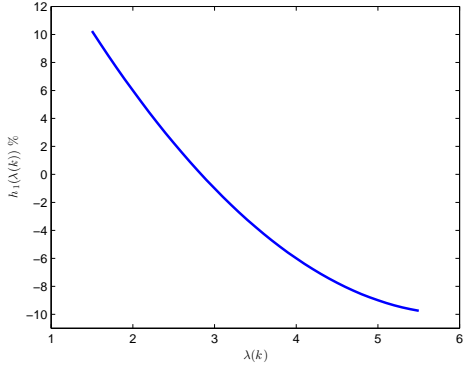}\\
  \caption{Graph of  $h_1$ in~(\ref{eq_h_func}).}\label{fig_h1}
\end{figure}

\begin{figure}
\centering
\includegraphics[width=8cm]{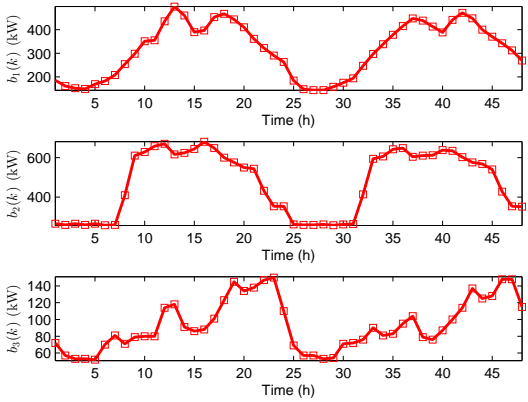}
\caption{Base load of microgrids $b_n(k)$ in our simulations.}\label{fig_base_load}
\end{figure}

 \begin{figure*}
\begin{equation*}
\begin{array}{ccc}
\hspace{-0.5cm}\includegraphics[width=6cm]{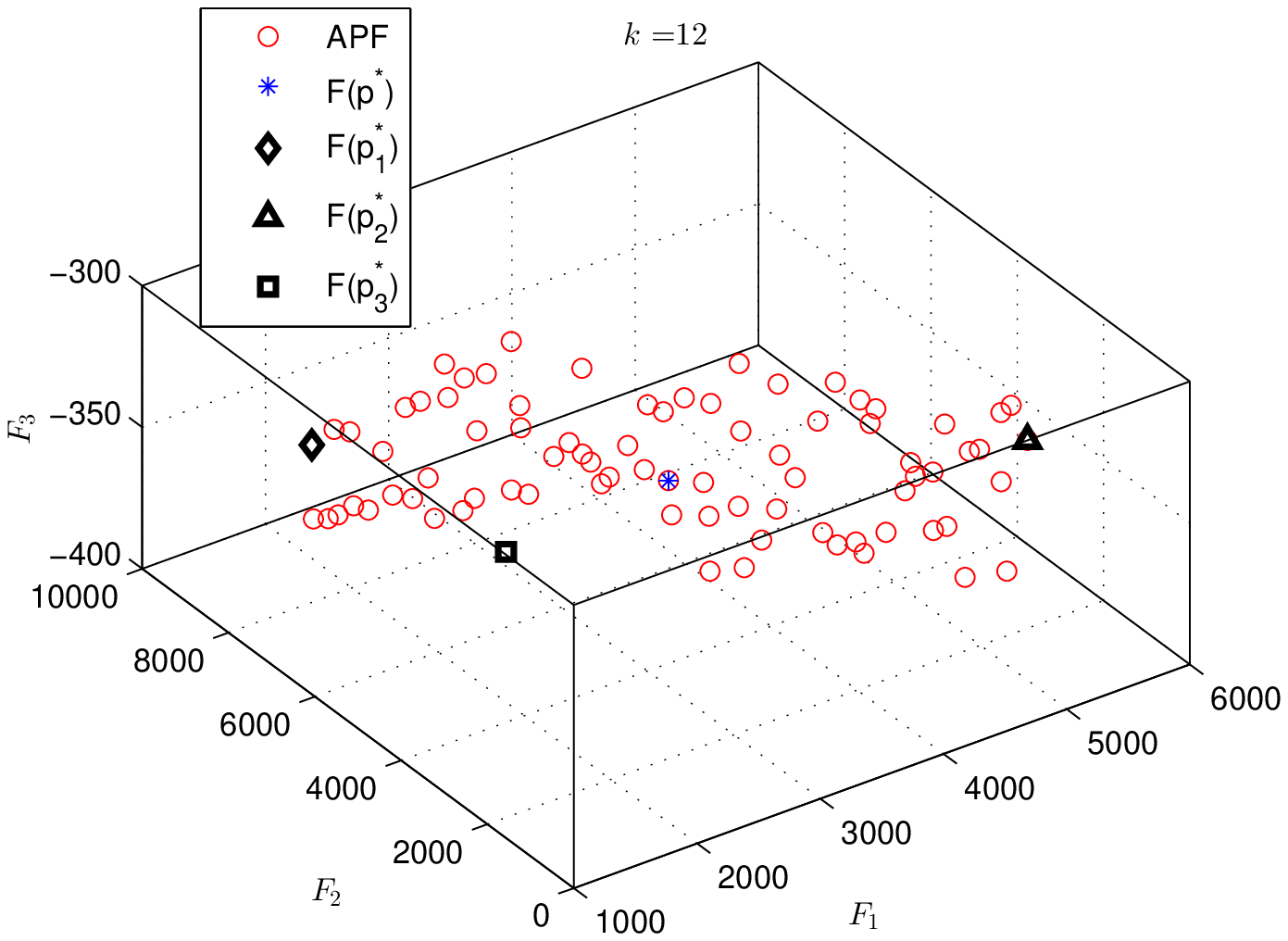} & \includegraphics[width=6cm]{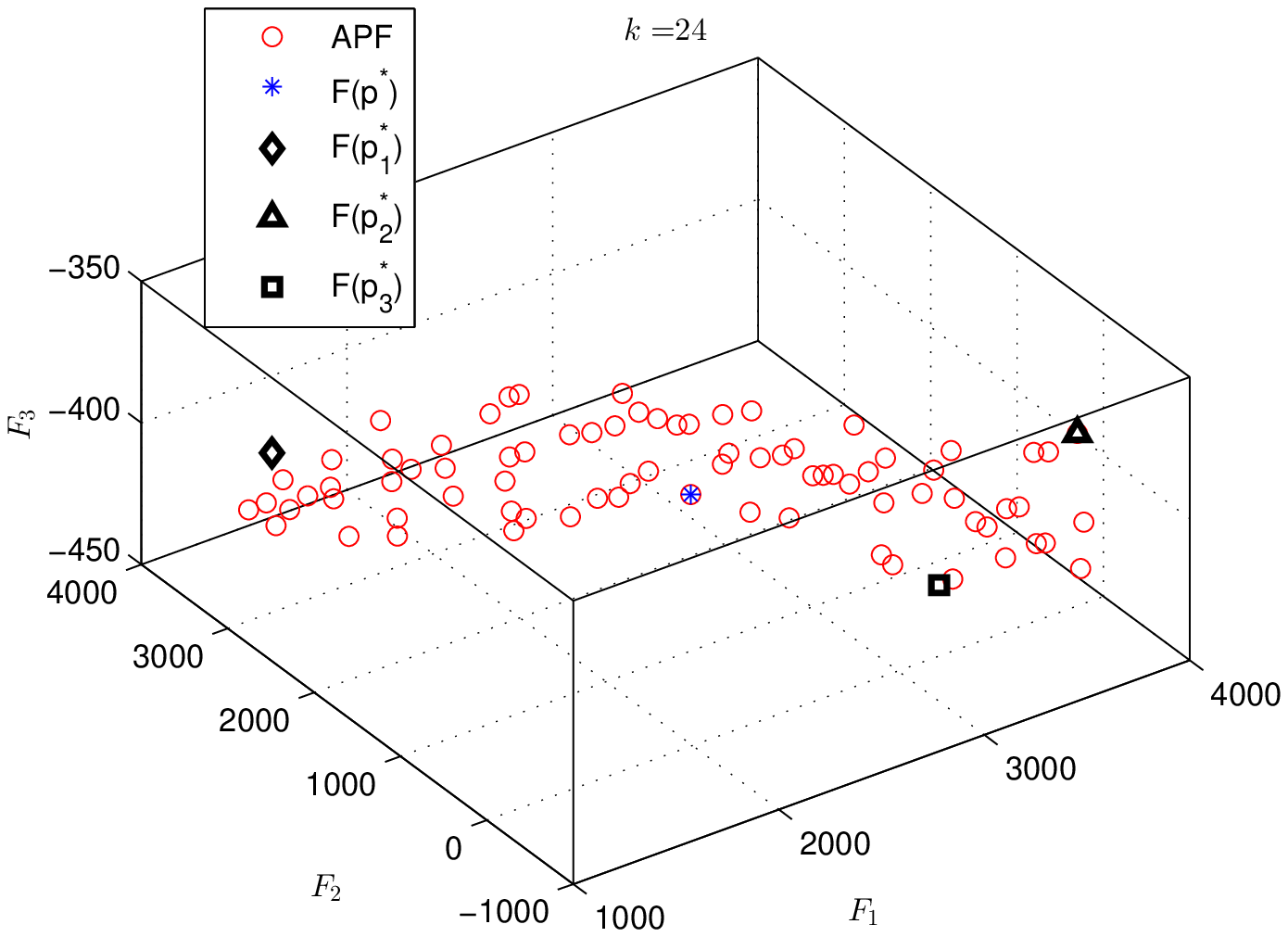} & \includegraphics[width=6cm]{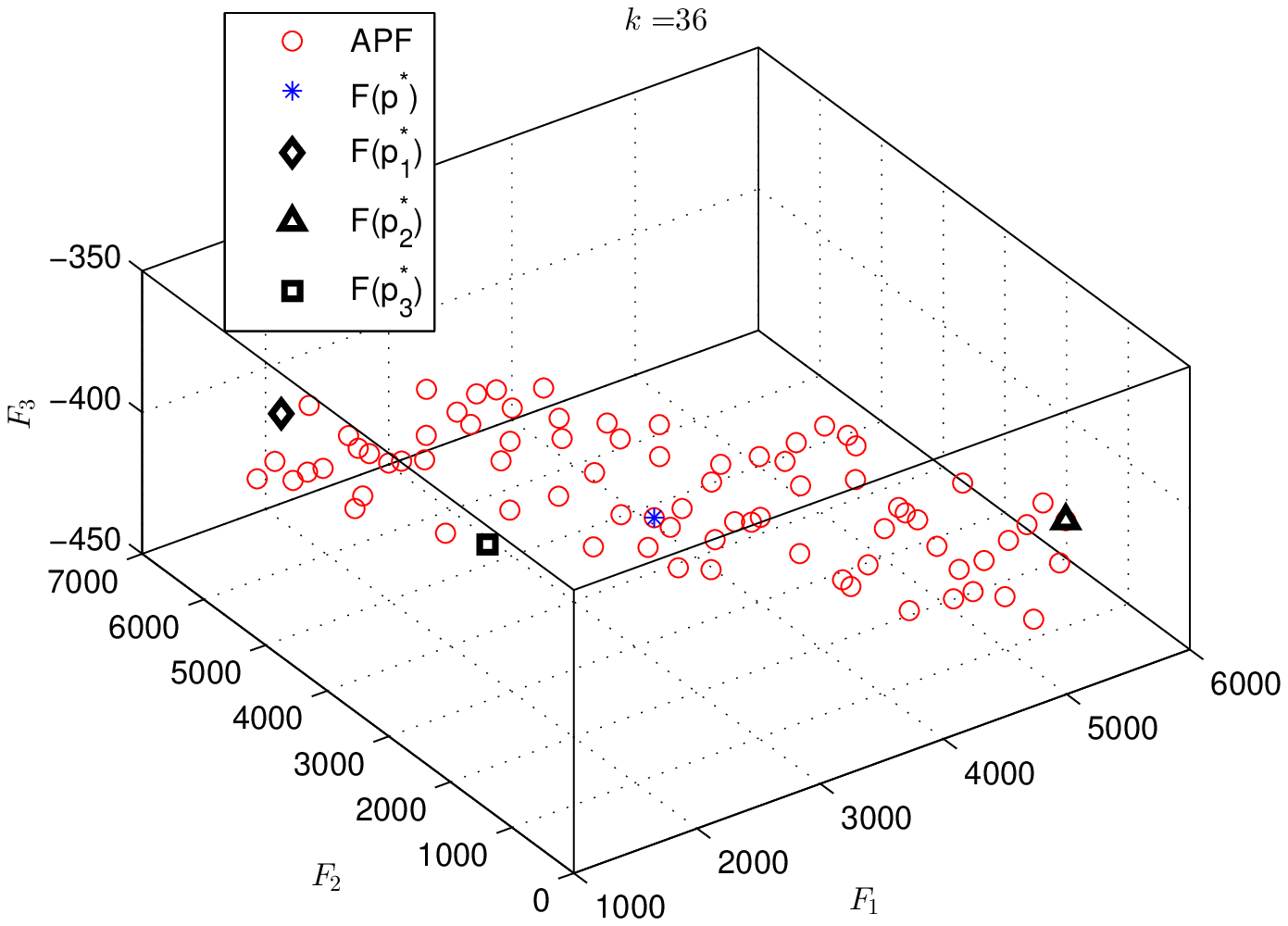} \\
  \mbox{(a)} & \mbox{(b)} & \mbox{(c)}
\end{array}
\end{equation*}
\caption{APFs and nondominated vectors $\bm{F}(\bm{p}^*)$  sampled at (a) time $k=12$; (b) time $k=24$; and (c) time $k=36$.
The~$\bm{p}_j^*$ is the point that achieves the minimum~$ \underline{F}_j$ defined in~(\ref{eq_extreme}).
}\label{fig_APF}
\end{figure*}

For the overall utility of microgrids, we let~\cite{price_em,faranda2007load,samadi2010optimal}
\begin{equation}\label{eq_Obj_d_function}
\begin{split}
   &  U_{d}(p_{d_1}(k), \ldots ,p_{d_N}(k),\lambda(k))\\
= { }    & \sum_{n=1}^N
( U(p_{d_n}(k),\omega_n)- \lambda(k) p_{d_n}(k)    )
\end{split}
\end{equation}
where
\begin{equation}\label{eq_U}
\begin{split}
  &  U(p_{d_n}(k),\omega_n) \\
 ={ }  & \left\{
  \begin{array}{ll}
    \omega_n p_{d_n}(k) -\frac{\alpha}{2}p_{d_n}(k)^2     , & \hbox{if } 0\leq p_{d_n}(k) \leq  \frac{\omega_n}{\alpha},   \\
   \frac{\omega_n}{\alpha}, & \hbox{if }  p_{d_n}(k) \geq \frac{\omega_n}{\alpha}.
  \end{array}
\right.
\end{split}
\end{equation}
The terms
$ U(p_{d_n}(k),\omega_n)$ and $\lambda(k) p_{d_n}(k)$ in the summation
are interpreted as the value and the cost derived from consuming power~$p_{d_n}(k)$ at microgrid $n$,  respectively.
For the power grid, the utility function in~(\ref{eq_Obj_g}) has been set as~\cite{price_em,mohsenian2010optimal,samadi2010optimal}
\begin{equation}\label{eq_Obj_g_function}
     U_{g}(p_{g}(k),\lambda(k))= \lambda(k)p_{g}(k)  -(a(k)p_{g}(k)^2+b(k)p_{g}(k)+c(k)  )
\end{equation}
 where the first and second terms on the right-hand side represent the value derived from power generation and
the generation cost, respectively. The values of $\omega_n,\alpha,a(k),b(k),$ and $c(k)$ in~(\ref{eq_U}) and~(\ref{eq_Obj_g_function}) can be found in~\cite{samadi2010optimal}
and in practice, they can be obtained by statistical analysis~\cite{yu2012statistical}.
For the ISO, the secure energy level  $\underline{s}_n=\bar{s}_n/2$ in~(\ref{eq_Obj_iso}) has been chosen.

The values
$N_{nom}=80$, $N_{max}=320$,  and $t_{max}=200$ have been used as the inputs to the proposed algorithm presented in Fig.~\ref{fig_HMIA}.
The bounds~$\bm{\underline{p}}$ and~$\bm{\bar{p}}$ have been assigned as follows.
Because there are strong regulations on market pricing in practice, we suppose that
\begin{equation}\label{eq_price_bd}
    \lambda(k)\in [\underline{\lambda},\bar{\lambda}]
\end{equation}
with  $\underline{\lambda}=1.5$ and $\bar{\lambda}=5.5$
to mimic such regulations  in the simulations.
Note that for $n\in \mathcal{N}_s$,~(\ref{eq_s}) and~(\ref{eq_charge_rate}) imply that
         \begin{equation*}
         \begin{split}
           p_{g_n}(k) { } \leq{ }&     \Delta s_n +  p_{d_n}(k) -v_n(k)  \mbox{ and }\\
           p_{g_n}(k) { } \geq { }   & -\Delta s_n +  p_{d_n}(k) -v_n(k).
         \end{split}
         \end{equation*}
Based on~(\ref{eq_h_func}) and~(\ref{eq_price_bd}),
 we have
\begin{equation*}
h_n(\bar{\lambda}) \leq  h_n(\lambda(k))  \leq   h_n(\underline{\lambda})
\end{equation*}
 and, therefore,
                \begin{equation}\label{eq_bd_p}
         \begin{split}
           p_{g_n}(k) { } \leq{ }&     \Delta s_n +  (1+ h_n(\underline{\lambda}))b_n(k) -v_n(k)  \mbox{ and }\\
           p_{g_n}(k) { } \geq { }   & -\Delta s_n +  (1+h_n(\bar{\lambda}))b_n(k) -v_n(k).
         \end{split}
         \end{equation}
Referring to~(\ref{eq_p_def}),~(\ref{eq_price_bd}), and~(\ref{eq_bd_p}), we let
  \begin{equation*}
{\scriptsize
\begin{split}
  [\bm{\underline{p}}]_1 { }={ }  &\underline{\lambda}, [\bm{\underline{p}}]_j= -\Delta s_{j-1} +  (1+h_{j-1}(\bar{\lambda}))b_{j-1}(k)-v_{j-1}(k) , j=2,3, \\
[\bm{\bar{p}}]_1{ }={ }&  \bar{\lambda}, [\bm{\bar{p}}]_j= \Delta s_{j-1} +  (1+  h_{j-1}(\underline{\lambda})    )b_{j-1}(k) -v_{j-1}(k),  j=2,3.
\end{split}
}
\end{equation*}
be the  bounds for antibody generation at time~$k$.

\begin{figure*}
\begin{equation*}
\begin{array}{ccc}
\hspace{-0.5cm}\includegraphics[width=5.5cm]{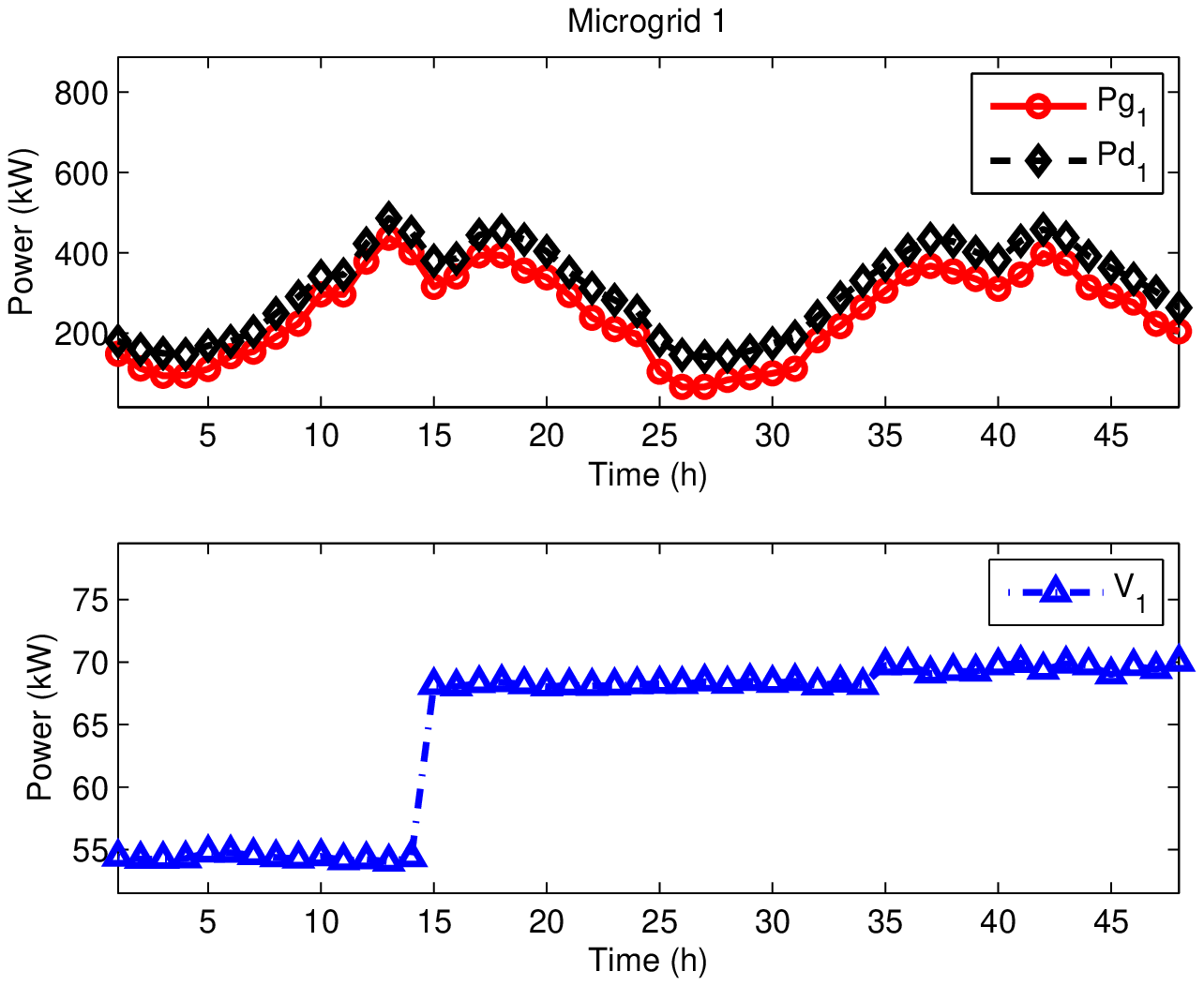} & \includegraphics[width=5.5cm]{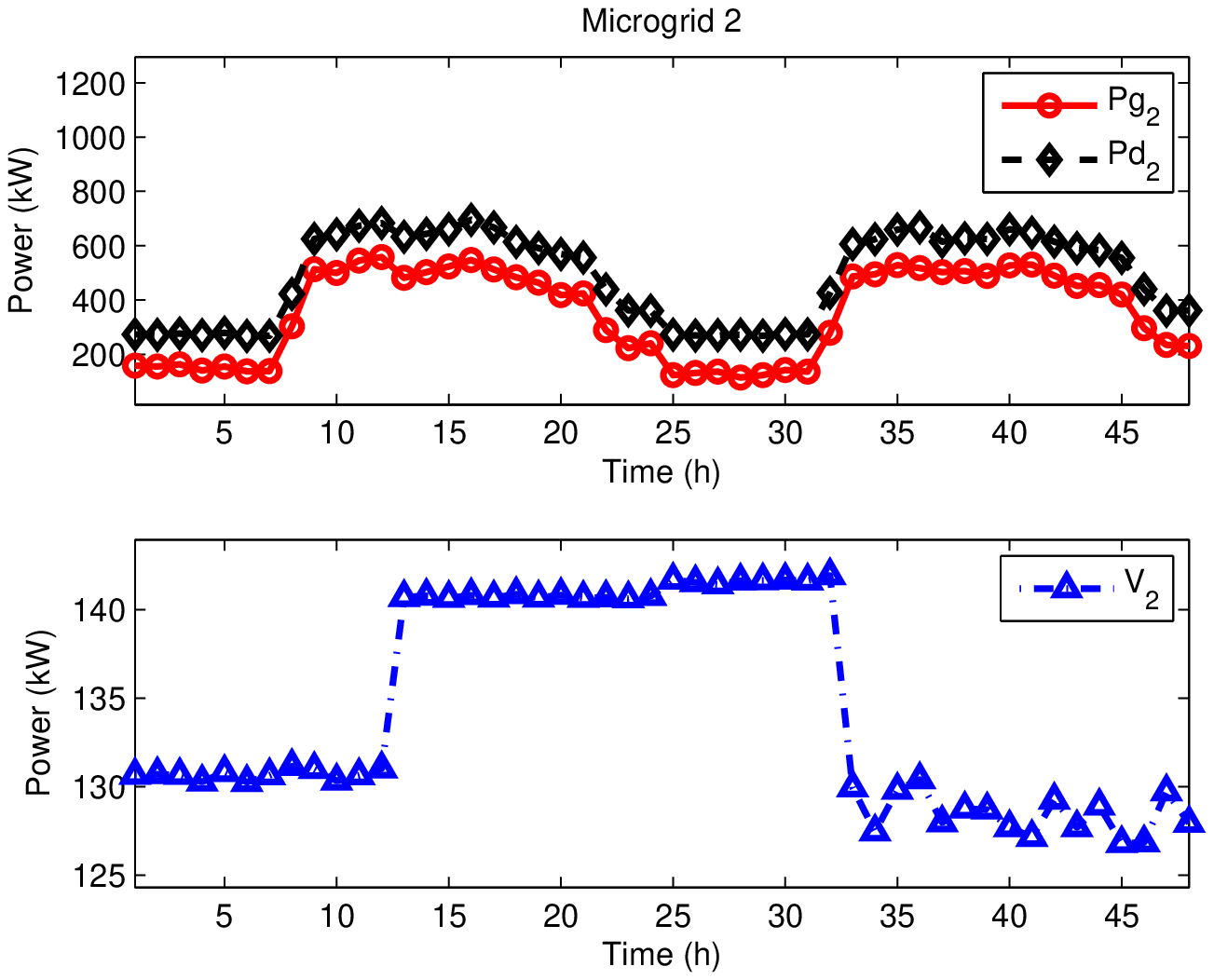} & \includegraphics[width=5.5cm]{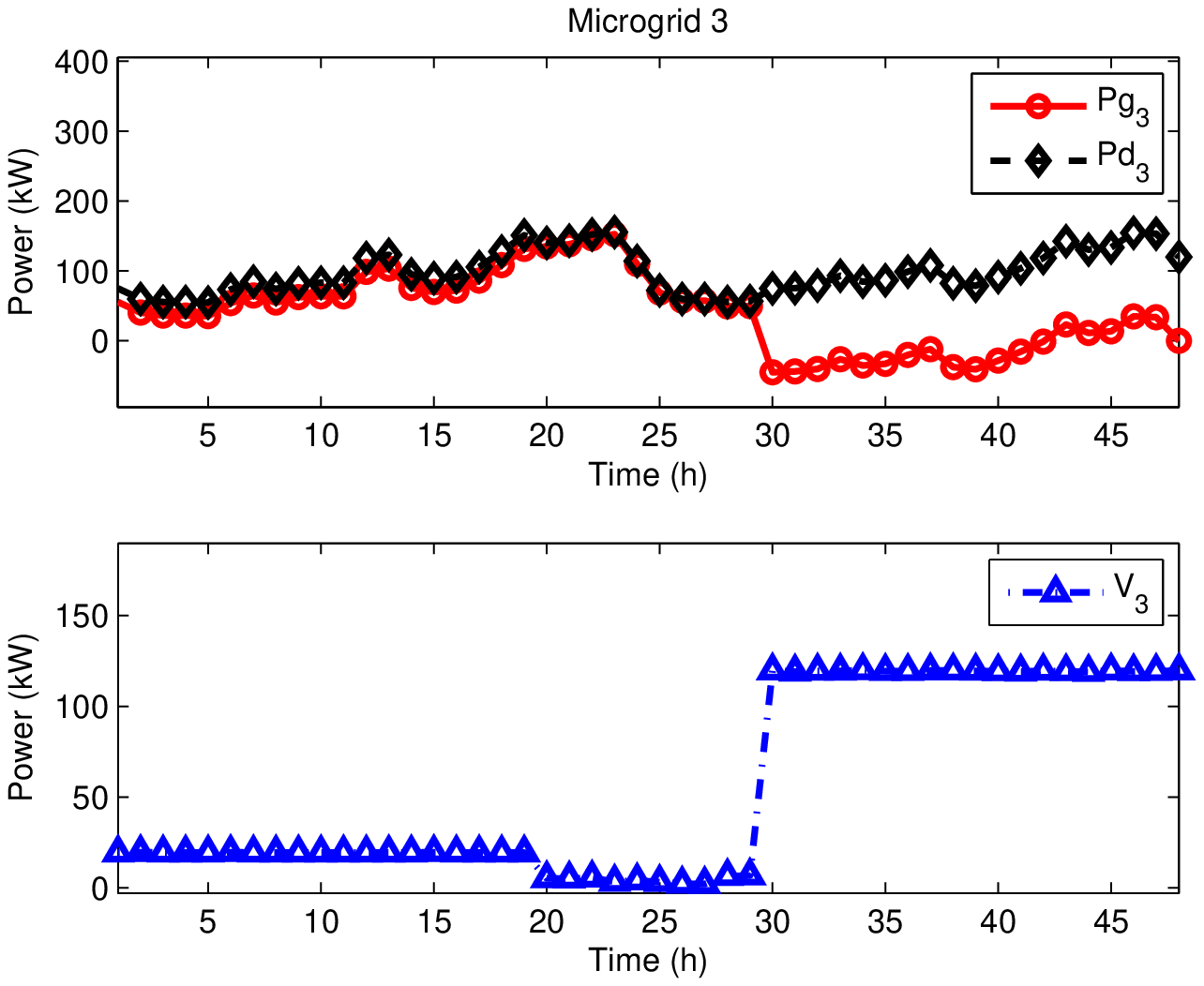} \\
  \mbox{(a)} & \mbox{(b)} & \mbox{(c)}
\end{array}
\end{equation*}
\caption{Relation between the distributed power $p_{g_n}(k)$, power demand $p_{d_n}(k)$,  and power input $v_n(k)$ provided by RESs in (a) microgrid 1, (b) microgrid 2, and (c) microgrid 3.
}\label{fig_g_d_v}
\end{figure*}

Fig.~\ref{fig_APF} presents samples of APFs, illustrating how one objective affects the others.
In our scenario, if a vector on one APF has an entry that achieves an extreme value, then the associated design is regarded as being in favor of one particular participant.
To yield a fair design, a nondominated vector after normalization  should be positioned away from extremes equally in all dimensions.
This has been achieved using~(\ref{eq_knee}) to obtain the vectors $\bm{F}(\bm{p}^*)$ in Figs.~\ref{fig_APF} (a)--(c).
As a result, all the vectors  lie in the ``middle'' of the APFs graphically
and, therefore, the proposed multiobjective approach can provide a reasonable way to produce a fair design to all participants.

 Fig.~\ref{fig_g_d_v} provides an overall view on the relation between  $p_{g_n}(k)$, $p_{d_n}(k)$  and $v_n(k)$.
The demand $p_{d_n}(k)$ responses to the changes of $p_{g_n}(k)$ and $v_n(k)$, and
in most cases, $p_{d_n}(k)$ is larger than the supply $p_{g_n}(k)$ because of the existence of $v_n(k)$.
The distributed power $p_{g_n}(k)$ has been adjusted according to the power input $v_n(k)$:
$p_{g_n}(k)$ increases upon decreasing $v_n(k)$, and  decreases when $v_n(k)$ increases.

Fig.~\ref{fig_energy_state} shows the energy management at each microgrid.
As shown in the proof of Theorem~\ref{thm_stable}, we can keep the stored energy~$s_n(k)$ at a constant level if desired.
 This implies that when an SOP regarding the interest of the ISO is considered, the optimal objective value $\sum_{n =1}^{N_s}  \bar{s}_n$ can always be achieved after a period of time.
In our scenario, however, since multiple objectives were considered,
the stored energy levels~$s_n(k)$ vibrated  in response to the time-varying $p_{g_n}(k)$, $p_{d_n}(k)$, $v_n(k)$, and $\lambda(k)$.

Finally, as discussed in~\cite{chiu_sg}, the vibration of price
plays an important role in  energy management at microgrids.  It can be observed from Fig.~\ref{fig_cpr_price} that
the proposed approach does yield prominent price vibration.
Ideally, when a large portion of the demand is shiftable and an aggregate utility function is used, high prices yield peak load shaving and storage discharging
while low prices yield valley load filling and storage charging~\cite{SGComm_Nov_12}.
These phenomena are not prominent in our simulations mainly because
only a small portion of the demand was assumed to be shiftable
and a multiobjective approach was used.
In a multiobjective scenario, for example, high prices may not yield storage discharging
because the ISO does not consider price values and simply desires  high, secure energy levels.

The multi-microgrid system operated in consideration of  the interests of all participants.
Price generation and power distribution were performed according to the selected Pareto solutions.
As a result, the physical constraints were satisfied during the operation, and the power demand and stored energy levels were adjusted properly.
The validity of the proposed multiobjective approach was thus confirmed.

\begin{figure}
\centering
\includegraphics[width=8cm]{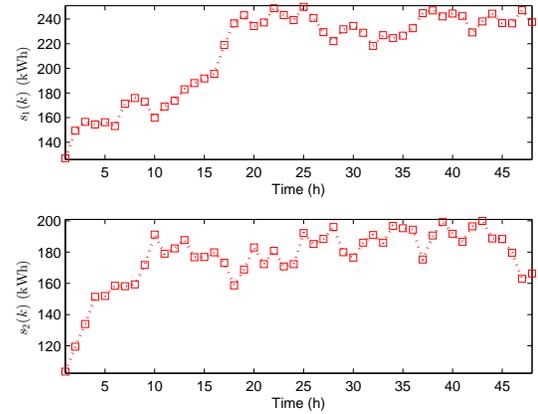}
\caption{ The stored energy levels $s_n(k)$ at the microgrids.
  }\label{fig_energy_state}
\end{figure}

\begin{figure}
\centering
\includegraphics[width=6cm]{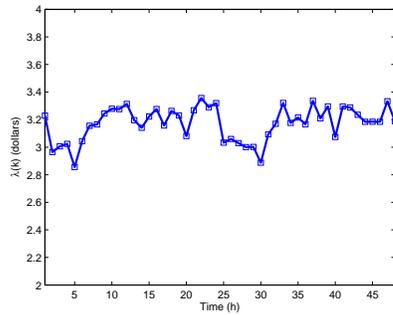}
\caption{Price signal $\lambda(k)$ generated from the MO using the proposed multiobjective approach.
}\label{fig_cpr_price}
\end{figure}

\section{Conclusion}\label{sec_con}

A multi-microgrid system design in consideration of interests of the microgrids, power grid, and ISO has been considered in this paper.
We believe that a fair scheme can promote active participation, which provides foundation for
new and interactive service in the future grid.
To this end, we  have formulated the design problem as an MOP
and proposed the MOIA to solve it, leading to a
 multiobjective design approach.
This approach maximizes the utilities of the microgrids, power grid, and ISO simultaneously.
Pareto optimal
market prices and power distribution can then be produced for the MO and DNO, respectively.
Our multiobjective approach is  general and flexible.
We have argued that  the proposed methodology can be readily applied to other scenarios with objectives and constraints different from those considered in this paper.
This is because the proposed MOIA
that searches for Pareto optimal designs
is not developed based on particular structures of the objective and constraint functions.

%\bibliographystyle{IEEEtran}
%\bibliography{IEEEabrv,SG}

\end{document}